\newtheoremstyle{mystyle}%
{3pt}
{3pt}
{\upshape}
{}
{\bfseries}
{.}
{.5em}
{}
\crefname{section}{Section}{Section} 
\crefname{subsection}{Section}{Section}
\theoremstyle{mystyle}
\crefname{thm}{Theorem}{Theorems}
\Crefname{thm}{Theorem}{Theorems}
\newtheorem{thm}{Theorem}
\crefname{corll}{Corollary}{Corollaries}
\theoremstyle{remark}
\theoremstyle{definition}
\theoremstyle{mystyle}
\DeclareMathOperator{\trace}{tr}
\DeclareMathOperator{\tr}{tr}
\newcommand{\new}[1]{{\color{black}#1}}
\begin{document}

\title{\new{Error Mitigation Thresholds in Noisy Random Quantum Circuits}}
\author{Pradeep Niroula}
\affiliation{Joint Center for Quantum Information and Computer Science, NIST/University of Maryland, College Park, Maryland 20742, USA}
\author{Sarang Gopalakrishnan}
\affiliation{Department of Electrical Engineering, Princeton University, Princeton, NJ 08544, USA.}
\author{Michael J. Gullans }
\affiliation{Joint Center for Quantum Information and Computer Science, NIST/University of Maryland, College Park, Maryland 20742, USA}
\date{\today}

\begin{abstract}
Extracting useful information from noisy near-term quantum simulations requires error mitigation strategies. A broad class of these strategies rely on precise characterization of the noise source. \new{We study the robustness of probabilistic error cancellation and tensor network error mitigation when the noise is imperfectly characterized}. We adapt an Imry-Ma argument to predict the existence of a threshold in the robustness of these error mitigation methods for random spatially local circuits in spatial dimensions $D \geq 2$: noise characterization disorder below the threshold rate allows for error mitigation up to times that scale with the number of qubits. For one-dimensional circuits, by contrast, mitigation fails at an $\mathcal{O}(1)$ time for any imperfection in the characterization of disorder.  As a result, error mitigation is only a practical method for sufficiently well-characterized noise.    We discuss further implications  for tests of quantum computational advantage, fault-tolerant probes of measurement-induced phase transitions, and quantum algorithms  in near-term devices.
\end{abstract}

\maketitle

\section{Introduction}

Noise presents a fundamental barrier to realizing scalable quantum information processing \cite{NielsenChuang}.
The theory of fault-tolerant quantum error correction shows how this barrier can be overcome in principle \cite{Aharonov1997,Dennis02,Gottesman09}.   However, despite remarkable experimental progress establishing the basic validity of the theory of fault-tolerance \cite{egan2021faulttolerant,Krinner22,Acharya22,Sivak22}, achieving large-scale quantum computing with error corrected qubits remains a formidable challenge.  In recent years, quantum error mitigation has emerged as a complementary paradigm for addressing the effects of noise in large-scale quantum devices \cite{McArdle2020,cai2022quantum,
Temme17,Li17,Kim23,McClean17,Bonet18,McArdle19,Cotler19,Huggins21,Koczor21,Czarnik21,Srikis21}.  At its core, error mitigation relies on the ability to accurately characterize the interaction of the system with its environment.  Armed with this knowledge, one can design classical post-processing techniques to construct more accurate estimators of the noiseless signal using data obtained from a noisy device.  \new{For example, similar methods have been ubiquitously employed in quantum process tomography to reliably extract error models in the presence of noisy operations \cite{Emerson05,Knill08,Nielsen21}.}

A key aspect of the threshold theorem for fault-tolerant quantum computing is that it applies even in the case where the individual components used to implement the error correction are noisy.
An analogous question for error mitigation strategies is whether they work when the noise is imperfectly characterized.
As noted above, related questions have been addressed for quantum process tomography in the presence of faulty operations \cite{Emerson05,Knill08,Nielsen21}; however, these models are often considered only at the level of a few qubits,  generally precluding the existence of a sharply defined phase transition. In the case of many-body tomographic problems (e.g., Pauli noise estimation \cite{Flammia2019,Harper2020} or Hamiltonian learning \cite{Bairey19,Anshu21}), thresholds in learnability are possible and likely arise under some circumstances.  However, to our knowledge, threshold results for the robustness of quantum error mitigation have not been previously considered.

In this paper, we demonstrate the possibility of threshold in the robustness of quantum error mitigation.
\new{We focus on two conceptually simple forms of quantum error mitigation called ``probabilistic error cancellation'' (PEC) \cite{Temme17,Berg22} and tensor-network error mitigation (TEM) \cite{Filippov2023}.} 
PEC and TEM rely on the fact that many common noise channels are  invertible.    \new{Unfortunately, the inverse operation is not easily implemented by  physical evolution without access to the environment.  Instead the inverse channel is achieved through classical post-processing.  This post-processing step can be  implemented through a wide variety of methods \cite{Temme17,Filippov2023}, but  incurs exponential sampling overheads with increasing circuit layers \cite{Berg22,Takagi23,Tsobouchi23,Quek22}.}  Despite the apparent challenges, PEC has been used to significantly extend the accessible circuit depths of a given noisy quantum device, as demonstrated in recent experiments \cite{Berg22}.  Some other examples of quantum error mitigation include zero-noise extrapolation \cite{Temme17,Li17,Kim23}, symmetry-based error detection \cite{Bonet18,McArdle19}, cooling/purification \cite{Cotler19,Huggins21,Koczor21}, and learning based methods \cite{Czarnik21,Srikis21}. \new{ We show in the case of PEC and TEM, that the mitigation threshold  is related to the fundamental behavior of disordered Ising models in statistical physics.  As a result, we can draw on a wide range of previous results from those systems  in analyzing the threshold behavior for these error mitigation strategies. On general grounds, one might expect similar similar types of robustness thresholds to arise for other error mitigation methods, but the explicit mappings to disordered Ising models are less appparent in those cases.}

\new{Based on these connections to disordered Ising models, our central prediction is that PEC and TEM have an error mitigation threshold in random circuits in spatial dimensions $D \geq 2$ when the noise is imperfectly characterized, i.e., there is uncertainty in the parameters of the noise model that have random fluctuations in space and/or time.  }
We consider a model of random unitary circuits subject to depolarizing noise with binary disorder in the depolarization rate [see Fig.~\ref{fig:model}(a)], building on extensive work characterizing quantum phases and phase transitions in hybrid random circuits \cite{FisherRev22,PotterRev22}.  To mitigate the effects of the noise we apply a uniform ``antinoise'' channel that inverts the depolarizing noise on average.  We show that a replica statistical mechanics description of the problem has close parallels to the classical random field Ising model (RFIM) in $D+1$ dimensions. At the zero-mean field condition in the RFIM, a simple heuristic argument, originally due to Imry and Ma, \cite{Imry75} shows that the ordered phase survives random symmetry-breaking terms in sufficiently high dimensions.  In the random circuit problem, this analysis indicates the possibility for thresholds in the robustness of  error mitigation for $D\ge 2$, while $D=1$ is the marginal dimension.  We also analyze the performance of error mitigation as a function of circuit depth.  For depths larger than the linear size of the system, the mitigation fails, resulting in a fidelity worse then one obtains for the maximally mixed state.  The corresponding conjectured phase diagram of the system is shown in Fig.~\ref{fig:model}(b) for $D\ge 2$. Analytic and numerical studies of the two-copy replica theory and exact simulations of the mitigated circuit dynamics are consistent with our predictions.  

\begin{figure}[t]
\begin{center}
\includegraphics[width = .49 \textwidth]{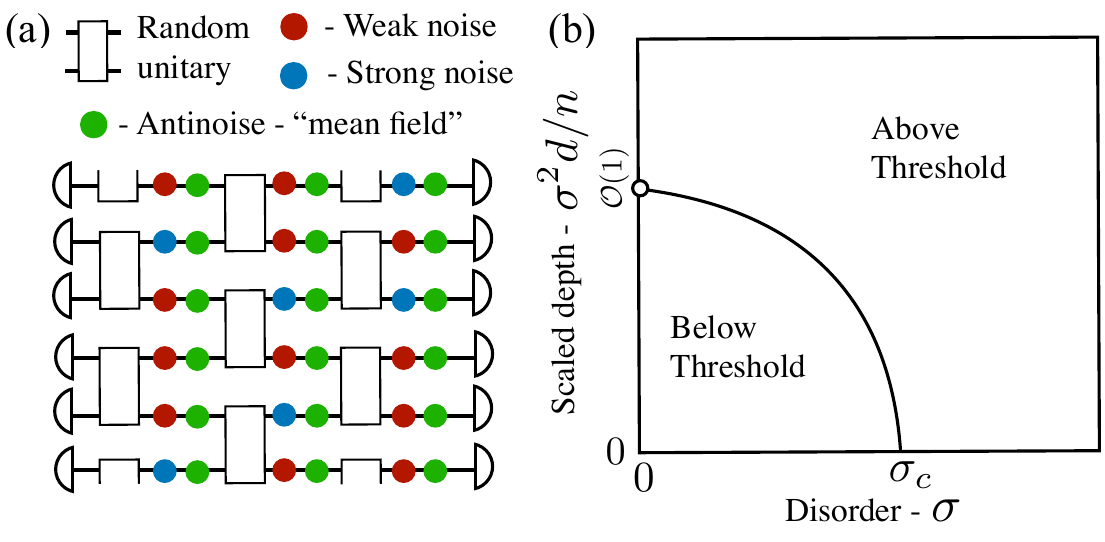}
\caption{Model and phase diagram: (a) Noisy random circuit for qubits with spacetime disorder in the noise rate and error mitigation. State-preparation and measurement are represented by semicircles at the beginning and end of the circuit, respectively.    (b) Phase diagram for the error mitigation threshold in this model for $D\ge 2$ spatial dimensions.  At high disorder rates and linear depths, the system transitions from a ``below-threshold'' phase where  mitigation succeeds on average to an ``above-threshold'' phase where mitigation fails.}
\label{fig:model}
\end{center}
\end{figure}

The paper is organized as follows: In Sec.\ \ref{sec:model}, we introduce our model for error mitigation based on noise and anti-noise channels.  In Sec.~\ref{sec:mft}, we present a simple mean-field theory solution for the error mitigation transition in the case of a Brownian circuit model. In Sec. \ref{sec:statmech}, we introduce the statistical mechanics mapping for discrete circuits that we use to analyze the phase transition in finite-dimensional models.  In Sec.\ \ref{sec:stab}, we present a rigorous result on the absence of an ordered phase in 1D with quenched disorder, as well as simple heuristic arguments for the existence of an ordered phase in more than 1D with spacetime random disorder and more than 2D with quenched disorder.  In Sec.\ \ref{sec:num}, we present numerical simulations of a mutual information probe in the statistical mechanics model and exactly evolved circuits to support our analytical results.  In Sec.~\ref{app:crossing}, we present an alternative probe of the phase transition based on the entropy of a single qubit.  In Sec.~\ref{app:fid}, we present an application of our results to benchmarking quantum advantage in random circuit sampling experiments. We present our conclusions in Sec.~\ref{sec:con}.  In the appendixes, we present further details on the mean-field calculations and the statistical mechanics mapping.


\section{Model}
\label{sec:model}

The basic setup we consider is a quantum circuit model as illustrated in Fig.~\ref{fig:model}(a). A system of qubits is initialized in a product state and a quantum circuit is applied to the system composed of two qubit gates.  Each two-site gate in the circuit is chosen to be Haar random \cite{Nahum16,Nahum17,vonKeyserlingk17} and are interspersed with noise and what we call ``antinoise'' channels (defined below).
The choice of Haar random gates is made purely for convenience of analysis. Our results are expected to hold more generically in systems displaying quantum chaotic dynamics \cite{Zhou2020}.

For the noise channel, we focus on the case of depolarizing noise and its inverse channel as the antinoise
\begin{align}
    \mathcal{E}_{\bm{x},t}(\rho) & = (1-q_{\bm{x},t}) \rho + q_{\bm{x},t} \trace_{\bm{x}}[\rho] \otimes I/2,\\ \label{eqn:anoise}
    \mathcal{A}_{\bm{x}}(\rho) &= \frac{\rho - q_a \trace_{\bm{x}}[\rho] \otimes I/2}{1-q_a},
\end{align}
where $\trace_{\bm{x}}[\rho]$ is a partial trace operation at site $\bm{x}$.  
Tuning $q_a = q_{\bm{x},t}$ inverts the noise at space-time site $(\bm{x},t)$ such that $\mathcal{A}_{\bm{x}}\circ \mathcal{E}_{\bm{x},t}(\rho) = \rho$.  To include the disorder, the noise rates are drawn randomly at each space-time location of the circuit  from one of two values $q_{\bm{x},t}=q_{1(2)}$ with probability $p$ and $1-p$, respectively, and $q_1<q_2$.  The antinoise can be tuned to a zero mean-field condition by taking $(1-q_a) = (1-q_1)^p (1-q_2)^{1-p}$ equal to a geometric mean. This condition ensures that a Pauli operator $\sigma_{\bm{x}}^\mu$ ($\mu \in \{X,Y,Z\}$) evolved under multiple rounds of noise and antinoise up to depth $t = d$ will have  expectation value
\begin{equation}
 \trace\bigg[\sigma_{\bm{x}}^\mu \prod_{t=1}^d \mathcal{A}_{\bm{x}} \circ \mathcal{E}_{\bm{x},t}(\rho)\bigg] = \prod_t \frac{1-q_{\bm{x},t}}{1-q_a} \trace[ \sigma_{\bm{x}}^\mu \rho],
\end{equation}
where the prefactor tends to a log-normal distribution that concentrates around one.  Any deviations of $q_a$ from this zero mean-field condition will result in an exponential growth rate that scales linearly in the depth $d$.  At the zero-mean field condition, however, the noise is canceled on average and the prefactor deviates from one only due to random fluctuations as $e^{\pm \mathcal{O}(\sqrt{d})}$ (i.e., the exponential growth rate is suppressed by a factor $\sqrt{d}$).

To see how unitary dynamics can improve the robustness of error mitigation, it is helpful to consider a toy model consisting of two sites $L/R$ with $q_L < q_R$ fixed in time. Now, take $1-q_a = \sqrt{(1-q_R)(1-q_L)}$ to satisfy the zero mean-field condition. The $R$ site will be subject to excess noise and the $L$ site will experience excess antinoise.   The case of weak unitary dynamics is analogous to the above threshold phase.  In the limit where the unitary dynamics tends to trivial evolution, any state initialized with non-zero Pauli expectation value on site $L$ will experience exponential growth of the expectation values leading to an unphysical density matrix after an $\mathcal{O}(1)$ time.  We discuss below how this type of rapid instability to unphysical states is characteristic of the above threshold phase. 

To build in the effects of unitary dynamics, consider the base circuit to consist of repeated applications of SWAP gates.  In this case, it is readily verified that the effects of the noise on Pauli expectation values will be completely canceled for arbitrarily deep circuits except for the most recent layer of gates.  In this two-qubit example, Haar random  gates behaves similarly to SWAP gates in that the noise is cancelled on average.  However, the temporal randomness will lead to exponential fluctuations of Pauli expectation values of size $e^{\mathcal{O}(\sqrt{d})}$, where the term in the exponent will follow the statistics of a random walk.  This tendency of the random unitary dynamics to suppress effects of the noise/antinoise from linear in $d$ to $\sqrt{d}$ scaling for certain initial conditions is characteristic of the behavior we find for the below-threshold phase.  

\section{Mean-field theory}
\label{sec:mft}

To develop some basic intuition for the threshold behavior in discrete circuits, we now consider a simple mean-field theory of the transition formulated in continuous time.
In this approach, the noise and antinoise terms are captured through a Lindblad master equation \cite{Sun21}
\begin{align}
    \dot{\rho} &=  -i [H(t),\rho] -\sum_{j}(\gamma_j - \gamma_a) \big( \rho - \trace_j[\rho] \otimes  I/2\big),\\
H&(t) = \sum_{i<j,\mu,\nu}J_{ij\mu\nu}(t)\sigma_i^\mu \sigma_j^\nu,
\end{align}
where $H(t)$ is a time-dependent Hamiltonian that sets the unitary evolution, $\gamma_i$ is the random noise rate on site $i$, and $\gamma_a$ is the antinoise rate.  We see that antinoise simply acts as a negative decay rate in the master equation.  In this formulation it is particularly clear that setting $\gamma_a = \gamma_j$ for all $j$ will completely remove the effects of noise.

We focus on the disordered case, however, where the noise rates are not known perfectly to the antinoise.  In particular, $\gamma_i =\gamma_{1(2)}$ is drawn randomly with probability $p$ and $1-p$, respectively. The noise is canceled on average by  setting $\gamma_a = \bar{\gamma} \equiv p\,\gamma_1 +(1-p)\,\gamma_2$. We take the unitary evolution to be given by an all-to-all coupled Brownian circuit model in which the $J_{ij\mu\nu}(t)$ are drawn from a white-noise Gaussian random process with variance parameters \cite{Lashkari2013,Jian22}
\begin{equation}
\langle J_{ij\mu\nu}(t) J_{k \ell \gamma \delta}(t') \rangle = \frac{J}{2 N} \delta_{ik} \delta_{j\ell} \delta_{\mu \gamma} \delta_{\nu \delta} \delta(t-t'),
\end{equation}
where $N$ is the number of qubits.  

To treat this model analytically, we average the dynamics  over the random coupling constants for multiple identical copies of the density matrix $\rho(t)^{\otimes k}$ ($k$ is a replica index).  In the equations of motion for the averaged moments $\rho_k(t) \equiv \mathbb{E}_H[\rho ^{\otimes k}]$, the unitary drive-term averages to zero, leading to a purely dissipative master equation for $\rho_k(t)$ (see Appendix \ref{app:mft}). Focusing on second moments $(k=2)$ gives rise to a particularly simple model with two global product steady states given by $ I^{\otimes N}$ and $S^{\otimes N}$ \cite{Jian22}, where $S$ is the SWAP operator acting on the two copies.  The all SWAP steady-state captures non-trivial corrections to second moment observables like the purity of sub-systems.  In the mean-field approximation, we enforce the density matrix to take a product state form $\rho_2 = \bigotimes_{i=1}^N \sigma_i$.  
Due to the lack of symmetries or conservation laws in the problem, $\sigma_i$ can be parameterized as
\begin{equation}
\sigma_i = (1/4+\delta_i) |s\rangle \langle s| + (1/4 -  \delta_i/3) P_T,
\end{equation}
where $\delta_i$ is the deviation from an infinite temperature state, $|s\rangle$ is a two-qubit singlet state  across the two copies, and $P_T$ is the projector onto the two-qubit triplet subspace of the two copies.   

The mean-field equations for $\delta_i$ take the form
\begin{equation}
\dot{\delta}_i = -4 \Big[  \Delta_i + \frac{ J}{N} \sum_{j \ne i} (3 + 4\delta_j) \Big] \delta_i,
\end{equation}
where $\Delta_i = \gamma_i - \gamma_a$.  This simple equation captures much of the basic physics we find later in the random circuit models, including the presence of a phase transition.  Notably, we can see that at weak values of the disorder, the interaction term acts as a restoring force towards the fixed point $\delta_i = 0$.  The other fixed point $\delta_i = -3/4$ corresponds to an unphysical density matrix, which is unstable at weak disorder.  At a critical disorder strength of $|\Delta_1| = 3J$ (see App.~\ref{app:mft}), however, the $\delta_i = 0$ fixed point becomes unstable.  The dynamics flows to a new stable fixed point which is an unphysical density matrix.  Thus, we see that the above threshold phase in the mean-field theory is characterized by an instabilty of the density matrix to unphysical states.  

To rigorously characterize the threshold behavior, particularly in finite-dimensional systems, we need a  more systematic treatment of the correlations in the state beyond the mean-field approximation.  To develop this approach, we turn to the discrete circuit models.  An added benefit of this formulation is that the connections to the RFIM become more explicit.

\section{Statistical Mechanics Mapping}
\label{sec:statmech}

To analyze the problem more rigorously, we turn to well-established mappings from random quantum circuit dynamics in $D$ spatial dimensions to statistical mechanics models in $D+1$ dimensions \cite{Nahum17,vonKeyserlingk17,Jian19,Bao20,Dalzell2022,Dalzell2021,Deshpande22,Gao21}.  We outline the details of the mapping in  Appendix \ref{app:statmech}.  The approach we take is to expand $\rho_k \equiv \mathbb{U}[\rho^{\otimes k}]$ into a basis of operators based on representations of the permutation group $S_k$ \cite{Dalzell2022}.  One can then derive update rules for the averaged state following each layer of gates, noise, and antinoise.  In this paper, we focus on the two-replica case where the two operators are $ I$ and $S$ (the SWAP gate).  We expect this model to be sufficient to capture many of the properties of the transition, as has been argued in the unitary case \cite{Nahum17}.    Numerical evidence based on exact simulations is presented below that support this claim.

For a system of $N$ qubits, without noise/antinoise, $\rho_2$ has two fixed points $ I^{\otimes N}$ and $S^{\otimes N}$ just as in the Brownian circuit model.  Configurations with persistent ``domain walls'' (i.e., mixtures of $I$ and $S$) are exponentially damped with depth in the model.  As a result, the two fixed points can be interpreted as ferromagnetic ground states of an Ising-type spin model. In this interpretation, noise acts as a symmetry breaking field because it favors the identity configuration and damps out configurations with $S$ operators \cite{Bao20}.  Antinoise acts as a field in the opposite direction, but it also adds additional sign-structure to the problem from the second term in Eq.~\eqref{eqn:anoise}. We account for this sign by labeling spacetime configurations of $I$ and $S$ operators by the sign of their contribution, then we can write $\rho_2 = \rho_2^+ - \rho_2^-$, where $\rho_2^\pm$  sum over spacetime configurations with a positive/negative sign.  

\begin{figure*}[t!]
\begin{center}
\includegraphics[width = .8 \textwidth]{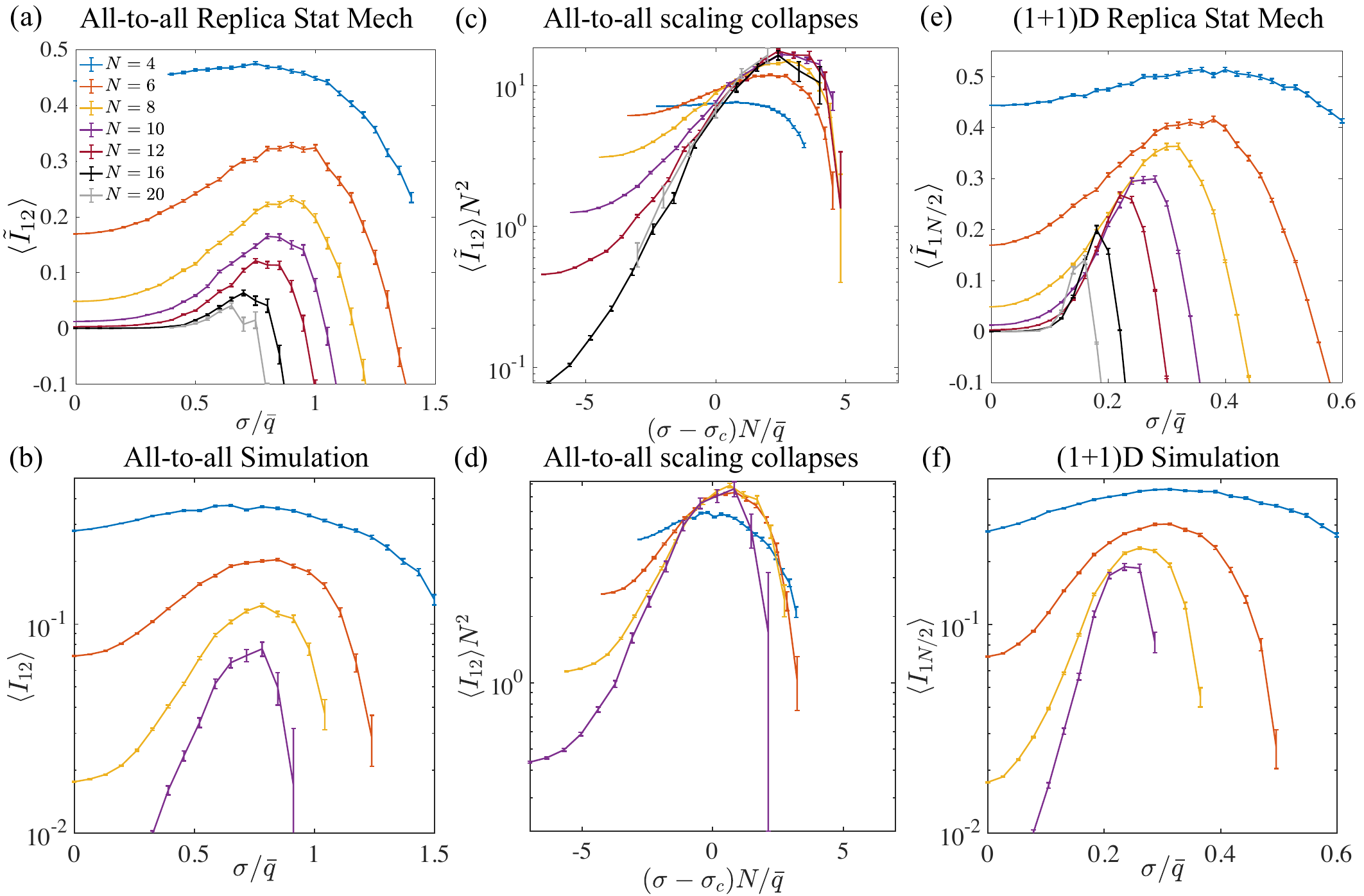}
\caption{Numerical tests of the threshold behavior: (a) Numerical simulations of circuit and noise-disorder averaged the correlation metric $\tilde{I}_{12}$ obtained from purity of subregions for two random sites in the system.  (b) Mutual information $I_{12}$ for exact simulations of the mitigated noisy random circuit.  (c-d) Scaling collapse of data from (a-b) using $\sigma_c/\bar{q} = 0.65(5)/0.55(5)$ and critical exponent on $x$-axis obtained from local probe (see Appendix \ref{app:crossing}).  The critical scaling on the $y$-axis is an ansatz.  (e-f) Similar quantities as (a-b) for antipodal sites of a 1D chain.  The strong drift in the peak with increasing system size is consistent with the lack of a below-threshold phase in this model.  In all the plots, we took a Haar random initial pure state with noise parameters $\bar{q}=0.1$ and $p = 0.9$ and ran to depth $d=4N$.}
\label{fig:data}
\end{center}
\end{figure*}

\section{Stability of Ordered Phase to Disorder}
\label{sec:stab}

Before proceeding, we first review the basic Imry-Ma argument that illustrates how an ordered ferromagnet can be stable to disorder for $D>2$ dimensions  \cite{Imry75}.  The classical Hamiltonian takes the form $H = -\sum_{\langle ij\rangle} \sigma_i \sigma_j + \sum_i h_i \sigma_i$, where $\sigma_i \in \{+1,-1\}$ are classical Ising spins and $h_i$ is drawn from an independent random distribution on each site. At the zero mean-field  condition,  $\langle{h_i}\rangle=0$,  the energy cost for ordering a region of size $L$ into a ferromagnetic domain in the presence of the random field will scale as $\sigma L^{D/2}$ where $\sigma$ is the standard deviation of the disorder.  On the other hand, the energy cost associated with creating a domain wall of broken ferromagnetic bonds at the boundary will scale as $L^{D-1}$.  We see that for $D >2$ the domain wall cost is dominant and the ordered phase is stable to small amounts of disorder.  However, for $D<2$, the random field wins and destroys the ferromagnet.  In $D=2$, it was eventually proven rigorously that the ordered phase is also unstable \cite{Aizenman89}.

To make precise connections of the error mitigation threshold problem to Imry-Ma arguments, we analyze finite-dimensional versions of the statistical mechanics models.  First, we define a \textit{simple initial condition} as a state $\rho_2$ that is identity everywhere except in a contiguous region $A$ where it is given by $\trace_{A^c}[\rho_2] \propto I^{\otimes |A|}+S^{\otimes |A|}$.  Such an initial condition can be prepared physically by taking a Haar random state (or a state drawn from a two-design such as the Clifford group \cite{DiVincenzo02}) on $A$ and the infinite temperature state on the complement of $A$.  The first model we consider is a \emph{quenched 1D} model described by a one-dimensional brickwork circuit of the type shown in Fig.~\ref{fig:model}(a) with Haar random two-qubit gates and spatially random noise rates that are constant in time.  We define $\sigma = \sqrt{p(1-p)}(q_2-q_1)$ as the standard deviation of the noise rates.  We prove that the quenched 1D model always exhibits an instability for one of the simple initial conditions. 

\begin{thm}[Instability in quenched 1D]
For the quenched 1D model with any $\sigma >0$, there is a simple initial condition for subregion $A$ of size $\mathcal{O}(\log N)$ such that as $N\to \infty$ and $d\to \infty$ with high probability $\log ||\rho_2^\pm || \ge e^{\Gamma d }/2^{2N-|A|}3^{|A|}$ for a constant $\Gamma > 0$.
\label{thm:quenched-1d}
\end{thm}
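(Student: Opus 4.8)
\emph{Proof strategy.} The plan is to run the Imry--Ma argument directly on the $(1{+}1)$-dimensional $\{I,S\}$ statistical-mechanics model of Appendix~\ref{app:statmech}, exploiting the fact that in one dimension a static space-time domain wall around a fixed region costs only $\mathcal O(d)$ rather than $\mathcal O(d|A|)$. First set up the bookkeeping: expand $\rho_2=\rho_2^+-\rho_2^-$ over spacetime configurations of $I$ and $S$, with signs arising only from the $-q_a$ term of $\mathcal A$. The key trivial observation is that evaluating any configuration operator $\bigotimes_x(\mathrm{op}_x)$, $\mathrm{op}_x\in\{I,S\}$, on the product vector $\ket{v}=\ket{\phi}^{\otimes2}$ gives $\bra{v}\bigotimes_x(\mathrm{op}_x)\ket{v}=1$, since $\bra{\phi\phi}I\ket{\phi\phi}=\bra{\phi\phi}S\ket{\phi\phi}=1$. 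Hence $\|\rho_2^{\pm}\|\ge\bra{v}\rho_2^{\pm}\ket{v}$ equals the total positive (resp.\ negative) configuration weight, so it is enough to exhibit a single spacetime configuration of large weight and of the appropriate sign.

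Next, record the single-site updates in the $\{I,S\}$ basis: noise sends $I\to I$ and $S\to(1-q_x)S+(q_x/2)\,I$, while $\mathcal A_x\circ\mathcal E_x$ sends $I\to I$ and $S\to\frac{1-q_x}{1-q_a}S+\frac{q_x-q_a}{2(1-q_a)}I$. The only negative entry is the $S\to I$ channel on low-noise sites ($q_x<q_a$); crucially, the $S\to S$ channel is multiplied by $(1-q_x)/(1-q_a)$. The Haar-averaged two-replica gate is a nonnegative map on $\{I,S\}$-pair configurations \cite{Nahum17,Bao20}, fixing $II$ and $SS$ with weight $1$ and keeping a straddling domain wall in place with some weight $\lambda\in(0,1)$. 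Now, for any contiguous window $A$, consider the single spacetime configuration equal to $S$ on $A$ and $I$ on $A^c$ at every layer, with all $\partial A$ gates keeping the wall fixed. It never invokes the negative $S\to I$ entry, so its weight is positive and at least $w_{\rm init}\,\lambda^{\mathcal O(d)}\prod_{t\le d}\prod_{x\in A}\tfrac{1-q_x}{1-q_a}=w_{\rm init}\,e^{d(\Lambda_A-\kappa)}$, where $\Lambda_A:=\sum_{x\in A}\log\tfrac{1-q_x}{1-q_a}$, $\kappa=\mathcal O(1)$ collects the boundary cost, and $w_{\rm init}=1/(2^{2N-|A|}3^{|A|})$ is the weight the ``all-$S$ on $A$'' term carries in the simple initial condition. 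This places the configuration in $\rho_2^+$, giving $\|\rho_2^+\|\ge w_{\rm init}\,e^{d(\Lambda_A-\kappa)}$; a variant invoking a single $S\to I$ branch on one of the (many) low-noise sites in $A$ has negative weight of the same order and yields the bound for $\rho_2^-$.

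It remains to produce, with high probability over the quenched disorder, a window of size $|A|=\mathcal O(\log N)$ with $\Lambda_A$ large. This is the Imry--Ma input: $\Lambda_A$ is a sum of $|A|$ i.i.d.\ bounded variables with mean zero --- that is precisely the zero-mean-field/geometric-mean condition $\log(1-q_a)=\mathbb E\log(1-q_x)$ --- and variance $\Theta(\sigma^2)$. Taking $|A|=C\log N$ and applying a Chernoff lower-tail estimate to the maximum over the $\Theta(N/\log N)$ disjoint length-$|A|$ windows, with high probability some $A^\ast$ satisfies $\Lambda_{A^\ast}\ge c\,\sigma\sqrt{|A|\log N}=c'\sigma\sqrt{C}\,\log N$; choosing $C$ large (depending only on $\sigma$ and $\lambda$) makes $\Lambda_{A^\ast}-\kappa=\Omega(\log N)$. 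Combining, $\|\rho_2^{\pm}\|\ge e^{\Omega(d\log N)}/(2^{2N-|A|}3^{|A|})$. The window size is pinned to $\Theta(\log N)$ from both sides: smaller windows make $\Lambda_{A^\ast}$ too small to overcome the $\mathcal O(1)$ boundary cost when $\sigma$ is small, while larger windows make $3^{|A|}$ super-polynomial and also destroy the high-probability existence of a favorable window.

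The main obstacle is the middle step: one must verify that the nonnegative Haar-averaged two-replica transfer matrix genuinely allows a single domain wall around $A$ to persist for the full depth $d$ at only $\mathcal O(1)$ weight cost per layer --- this is the intrinsically one-dimensional ingredient, since in $D\ge2$ the cost would scale with $|\partial A|$ and the argument would fail at small $\sigma$, consistent with the below-threshold phase there --- and that the brickwork even/odd layering introduces no hidden cancellation in this configuration. The probabilistic step is a standard extreme-value estimate, but promoting it from ``positive probability'' to ``with high probability'' while keeping $3^{|A|}$ polynomial is what fixes $|A|=\Theta(\log N)$. A convenient feature throughout is that we never need to control $\rho_2$ itself: lower-bounding $\|\rho_2^{\pm}\|$ only requires the total positive or negative weight, so the sign structure is exploited rather than fought.
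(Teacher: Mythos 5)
Your proof is correct and follows essentially the same Imry--Ma route as the paper's: the same all-$S$-on-$A$ spacetime configuration, the same $\mathcal{O}(1)$-per-layer domain-wall cost (the $(2/5)^2$ factor from the straddling gates) weighed against the bulk amplification $\prod_{x\in A}(1-q_x)/(1-q_a)$ over a favorable window of size $\Theta(\log N)$, and the same conclusion $e^{\mathcal{O}(d\log N)}$ over the initial-state normalization. The one substantive difference is the probabilistic input: the paper locates, with high probability, a contiguous run of $\mathcal{O}(\log N)$ sites all carrying the low rate $q_1$ (a longest-run argument specific to binary disorder), whereas you take $|A|=C\log N$ and extract the $\Theta(\sigma\sqrt{|A|\log N})$ extreme-value fluctuation of the mean-zero sum $\Lambda_A$ over $\Theta(N/\log N)$ disjoint windows --- both yield $\Lambda_{A^\ast}=\Theta(\log N)$, and your version generalizes beyond binary disorder. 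You also make explicit two points the paper leaves implicit, namely the lower bound $\|\rho_2^\pm\|\ge \langle v|\rho_2^\pm|v\rangle$ via a product test vector and the single negative $S\to I$ branch needed to populate $\rho_2^-$; these fill in rather than change the argument (your single-site $S\to S$ weights should carry squares, $(1-q_x)^2/(1-q_a)^2$, in the two-copy model, but this only rescales the exponent).
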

\begin{proof}
The proof of the theorem essentially follows a standard Imry-Ma argument.  For any finite $\sigma$ as $N\to \infty$ with high probability (whp) there will be a contiguous region of sites $A$ with noise rate $q_1$ of size $|A|=\mathcal{O}(\log N)$. For sufficiently large $N$ there will be a region satisfying $(1-q_1)^{|A|}/(1-q_a)^{|A|}  > (5/2)^2$ whp, which implies that the $\mathcal{O}(1)$ boundary cost of the two domain walls at the edge of $A$ can be overcome by the amplification of the $S$ configurations in the bulk. Let $m$ be the minimum integer such that $(1-q_1)^m/(1-q_a)^m  > (5/2)^2$ and define $\Gamma = \ln [(1-q_1)^m/(1-q_a)^m]- 2\ln 5/2$ Taking a configuration of all $S$ on $A$ with identities elsewhere that moves by order 1 site at the boundary at each time-step according to the rules for the replica model provides a contribution to $\rho_2^{\pm}$ that diverges faster than  $e^{\Gamma d}/2^{2N-|A|}3^{|A|}$.  The denominator is the normalization constant for the initial state.
\end{proof}

As a corollary to this result, an initial pure product state evolved with this model will also have this instability.  The converse of this type of Imry-Ma argument is that we expect the instability of these initial conditions to be suppressed for weak enough disorder for high enough dimensions.  In this case, for large regions $A$ the boundary cost of the domain wall scales as $|A|^{(D-1)/D}$, whereas the amplification from the antinoise scales as $|A|^{1/2}$.  As a result, with quenched disorder in time, we expect the instability to vanish for $D\ge 3$ at sufficiently small values of $\sigma$, while $D=2$ is the marginal dimension.  For space-time random disorder, the marginal dimension is $D=1$ and an ordered phase persists for $D\ge 2$.  
Note, that at large values of $\sigma$, the domain wall cost can be overcome even by a single site with strong antinoise, leading to an instability.  As a result, we expect to find a disorder driven phase transition at finite $\sigma$ in sufficiently high-dimensions associated with the onset of this instability.

\section{Mutual Information Probe}
\label{sec:num}

Figure \ref{fig:data} summarizes our numerical results on the mutual information between two sites in the system, which serves as a probe of the two phases.  We run numerical simulations of random circuits, together with noise and antinoise, to characterize this phase transition in two settings.  In the first case [see Fig.~\ref{fig:data}(a-d)], we simulate an infinite-dimensional circuit where entangling gates can exist between any pair of qubits and noise is random in both space and time.  In the second case [see Fig.~\ref{fig:data}(e-f)], we simulate $(1+1)D$ circuit dynamics assuming quenched noise, where the randomness is only spatial. The first model is a test of the high-dimensional limit of the model, while the second model is the same one considered in Theorem \ref{thm:quenched-1d}.

In both cases, we initialize a state of a $N$-qubit register in a global Haar-random state, which corresponds to the string $\propto I^{\otimes N} + S^{\otimes N}$ in the statistical-mechanical mapping.  For the infinite-dimension model, we apply random Haar gates on a set of $N/2$ disjoint pairs, randomly drawn from all $N(N-1)/2$ possible pairs of qubits. For the $(1+1)D$ model, each time-step corresponds to a layer in the brickwork circuit in Fig.~\ref{fig:model}(a) with periodic boundary conditions. Each layer of unitary gates is followed by the noise-antinoise channel on all qubits. 

After evolving the circuit to depth $d$, we take the partial trace of two qubits and calculate correlation metrics between the two qubits \cite{Skinner2019,Li2019}.  These low-order correlations are exponentially suppressed in $N$ for $\sigma = 0$, but they develop a peak at the critical point where long-range correlations develop. For $D=1$, the pair consists a randomly chosen qubit and its corresponding anti-pode at a distance of $N/2$; for the all-to-all model, both qubits are randomly chosen. 

We study the two systems using the replica statistical-mechanical mapping introduced above, with background details in  Appendix \ref{app:statmech}. The mapping lets us probe the phase transition for system sizes as large as $N=20$. For sizes up to $N=10$, we complement the numerics on statistical-mechanical mapping with direct circuit simulations.  In the replica stat-mech model, we study a measure of correlations $\tilde{I}_{ab}= - \log_2 \mathbb{E}_U[\mathrm{Tr}[\rho_a^2]] - \log_2 \mathbb{E}_U[\mathrm{Tr}[\rho_b^2]] + \log_2 \mathbb{E}_U[\mathrm{Tr}[\rho_{ab}^2]] $ averaged over noise-disorder (denoted by $\langle \cdot \rangle$ brackets), while in the exact simulations we study the standard mutual information $I_{ab}= S_a + S_b - S_{ab}$ for the von Neumann entropy $S_a = -\trace[\rho_a \log_2 \rho_a]$ \cite{NielsenChuang}.

The numerical results provide strong supporting evidence for the theoretical scenario outlined above.  We see evidence for a stable ordered phase in the high-dimensional limit of the all-to-all model, whereas the 1D model with quenched randomness in the noise is consistent with the lack of an ordered phase.  In addition, we estimate critical exponents for the transition in Fig.~\ref{fig:data}(b-c) that are consistent with simple rational values characteristic of mean-field behavior.

\section{Local Probe}
\label{app:crossing}
In Fig.~\ref{fig:data}, we presented different correlation metrics between two qubits in the system as a probe of the phase transition.  Here, we present an alternative metric based on the entropy of a single qubit in the system.  

Deep in the below threshold phase, this single-qubit entropy quantity saturates to one bit, while above threshold it diverges to large negative values due to the unphysical density matrix.  As a result, we expect a crossing to occur at the phase transition.  Numerical simulations of the two-replica stat-mech model for the all-to-all circuit model illustrate this behavior.  In Fig.~\ref{fig:crossing}(a), we show the unscaled behavior of the entropy for different system sizes, which shows a crossing near $\sigma_c/\bar{q} = 0.65(5)$.  Collapsing the data with this value of $\sigma_c$ fixed, we estimate a critical exponent $\mu = 1.0(2)$.  In Fig.~\ref{fig:data}(c-d), we fixed $\mu = 1$ in collapsing the mutual information data based on these scaling results.  This single-qubit quantity also has advantages for experimental probes of the transition as it requires minimal tomographic overhead to estimate. 

\begin{figure}[h!]
\begin{center}
\includegraphics[width = .49 \textwidth]{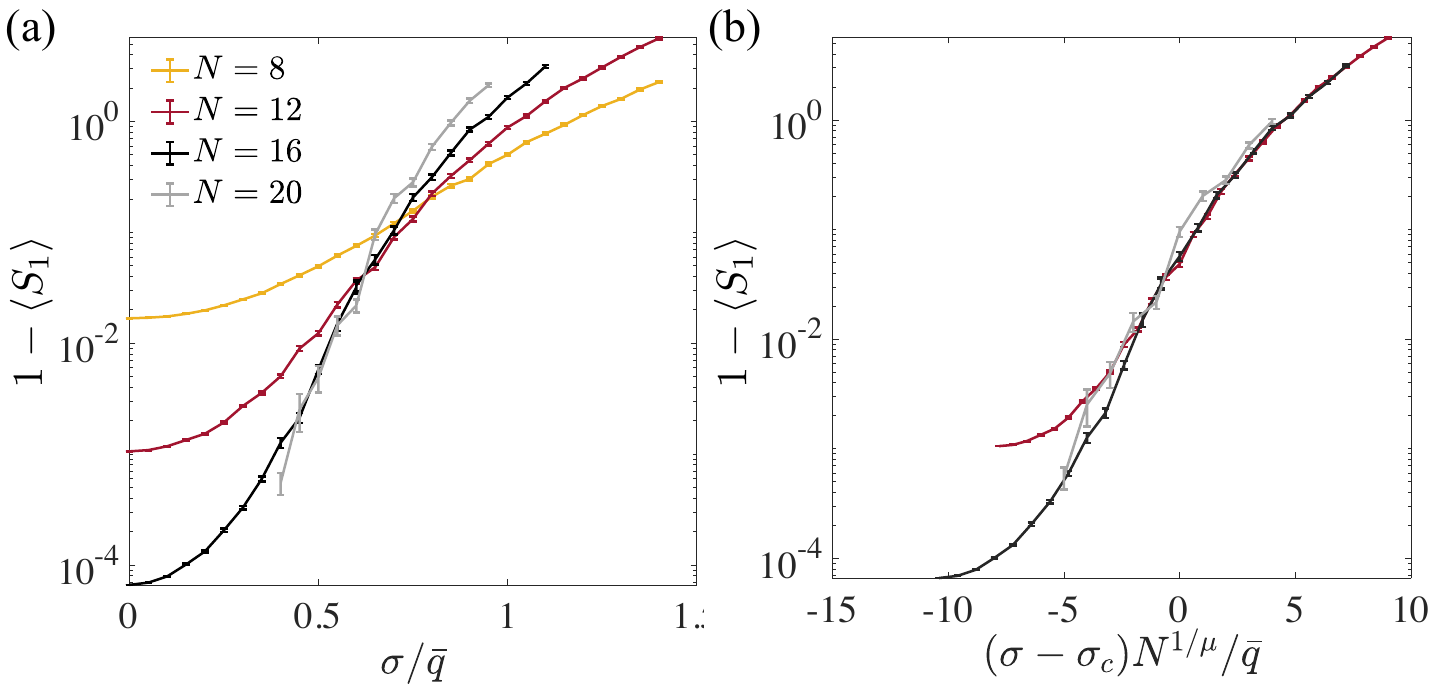}
\caption{Local probe of the error mitigation threshold: (a) Entanglement entropy of a single-site in the system for different system sizes.  The larger sizes begin to develop a crossing, indicative of the phase transition.  (b) Scaling collapse for $\sigma_c/\bar{q} = 0.65(5)$ and $\mu = 1.0(2).$ }
\label{fig:crossing}
\end{center}
\end{figure}

\section{Error Mitigated Fidelity Benchmarks}
\label{app:fid}

Here, as an application of our results, we introduce mitigated fidelity benchmarks and demonstrate an exponential improvement in a mitigated verison of the linear cross-entropy benchmark below the error mitigation threshold.

The task of sampling from the output distribution of a noiseless random circuit is widely conjectured to be intractable with classical computers \cite{Boixo2018,Bouland2019,Movassagh2018,Hangleiter22}.  This conjecture forms the basis for claims of achievement of quantum computational advantage in recent experiments \cite{Arute2019a,Wu21,Zhu21}.  The experimental claims remain controversial, however, partly because of the effects of noise on the output that render the signal classically simulatable at high depth \cite{Gao2018,Bouland2021,Aharonov22}. 
To provide evidence that the output signal still remains close to the ideal case, one can estimate fidelity benchmarks using the samples from the experiment.  Verifying the claim of computational advantage in the case of noisy circuits then reduces to the task of achieving a sufficiently high ``score'' on the benchmark \cite{Aaronson2019}.

Recall that the linear XEB is  given by the formula \cite{Boixo2018,Arute2019a}
\begin{equation}
    F_{\rm XEB} = 2^n \sum_x p_{ n}(x) p_{0}(x) -1,
\end{equation}
where $p_0(x) = |\langle x | U_d \cdots U_1 | 0 \rangle|^2$ is the probability of measuring outcome $x$ for the noiseless circuit of depth $d$ and $p_{ n}(x) = \langle x| \mathcal{E}_d \circ \mathcal{U}_d \circ \cdots \mathcal{E}_1 \circ \mathcal{U}_1( | 0 \rangle \langle 0 |) | x \rangle $ is the analogous probability for the noisy circuit.

\begin{figure}[tb!]
\begin{center}
\includegraphics[width = .48 \textwidth]{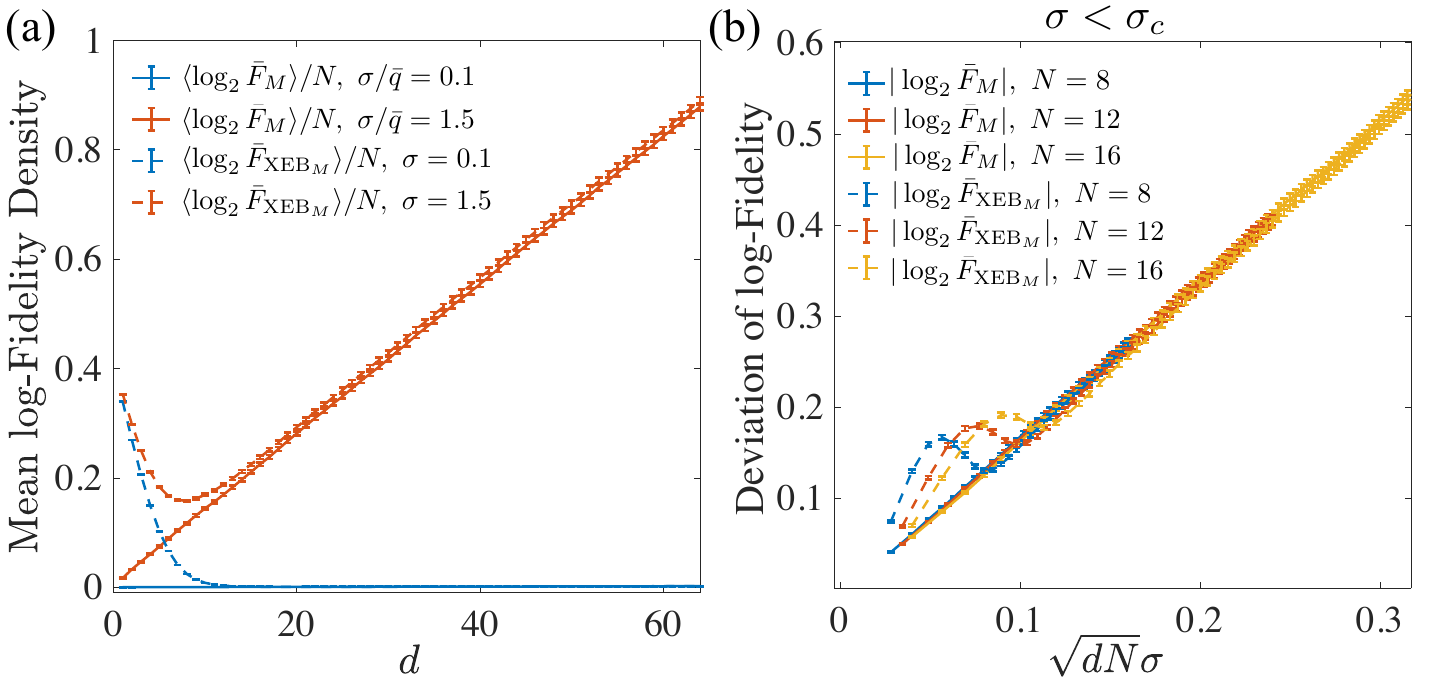}
\caption{Dynamics of fidelity benchmarks above and below threshold: (a) Disorder-averaged mean value of the logarithm of the circuit averaged mitigated fidelity $-\langle \log \bar{F}_M\rangle/N$ and cross-entropy benchmarking mitigated fidelity $-\langle\log F_{{\rm XEB}_M}\rangle/N$ above and below the error mitigation threshold in the all-to-all model.  (b) Dynamics of the standard deviation over the noise of the log-circuit-averaged mitigated fidelities below threshold showing the improved scaling of the typical log-fidelity as $\pm\mathcal{O}(\sqrt{Nd})$.  In both plots, we took a pure product initial state with noise parameters $\bar{q}=0.1$ and $p= 0.9$.}
\label{fig:fidelity}
\end{center}
\end{figure}

We now introduce the mitigated linear XEB, which is instead given by the formula
\begin{equation}
    F_{\rm XEB_M} = 2^n \sum_x  p_{n}(x) p_{a}(x)-1,
\end{equation}
where $p_a(x) = \langle x| \mathcal{A}\circ \mathcal{U}_d \circ \cdots \mathcal{A} \circ \mathcal{U}_1( | 0 \rangle \langle 0 |) | x \rangle $ is a quasi-probability for the circuit with antinoise inserted in place of noise.   The quantity $p_a(x)$ can be computed on a classical computer, which leads to a sampling formula for $F_{\rm XEB_M}$ using $M$ samples $x_i$ obtained from $p_n(x)$
\begin{equation}
    F_{\rm XEB_M} = \frac{2^n}{M} \sum_{i=1}^M p_a(x_i) - 1.
\end{equation}
This formula illustrates that the mitigated fidelity can be obtained without directly implementing PEC except in purely classical post-processing.
After circuit averaging, one can quickly show that for depolarizing noise and its antinoise partner, the mitigated fidelity is equivalent to the formula 
\begin{equation}
    \bar{F}_{{\rm XEB}_M} = 2^n \mathbb{E}_U \sum_x  p_{0}(x) p_{an}(x)-1,
\end{equation}
where $p_{an}(x) =  \langle x| \mathcal{A}\circ \mathcal{E}_d \circ \mathcal{U}_d \circ \cdots \mathcal{A} \circ \mathcal{E}_1 \circ \mathcal{U}_1( | 0 \rangle \langle 0 |) | x \rangle$ implements the antinoise on the same copy as the noise.  This identity follows because the noise and antinoise on one copy have the identical effect on the $ I$ and $S$ operators after averaging over circuits.  From this expression, we see that, in the case of perfect mitigation, $\bar{F}_{{\rm XEB}_M}$ reduces to its ideal value.

We can also define a mitigated fidelity that takes the form 
\begin{align*}
    F_M &= \tr[ \mathcal{A} \circ \mathcal{U}_d \circ \cdots \circ \mathcal{A} \circ \mathcal{U}_1(|0\rangle \langle 0|) \\
    &\times \mathcal{E}_d \circ \mathcal{U}_d \circ \cdots \circ \mathcal{E}_1 \circ \mathcal{U}_1(|0\rangle \langle 0|)].
\end{align*}
In the case where $\mathcal{E}_i = \mathcal{A}^{-1}$ for every $i$, we can see that $F_M=1$.

In Fig.~\ref{fig:fidelity}], we show that the log-fidelity at low noise rates and $d$ grows as $\mathcal{O}(Nd)$, whereas using error mitigation this scaling can be improved to $\mathcal{O}(\sqrt{Nd})$, representing an exponential improvement in the score that brings it closer to the $\mathcal{O}(d)$ scaling of the log-total variation distance \cite{Deshpande22}.  Moreover, as explained above, the cross-entropy benchmark fidelity can be mitigated entirely in classical post-processing. As a result, the mitigated XEB fidelity can be estimated with existing experimental data from random circuit sampling experiments. 

\section{Conclusions}
\label{sec:con}

In this work, we showed how imperfect noise characterization in the presence of random spacetime fluctuations in the noise parameters can lead to thresholds in the robustness of error mitigation.  We established the absence of a robust phase rigorously in the case of one-dimensional noisy random quantum circuits with quenched spatial disorder.  For higher dimensional systems, we presented analytic and numerical evidence that the below threshold phase persists for weak disorder and short enough times.  To further support our arguments, we presented a mean-field theory where the phase transition manifests in an instability of the mitigated density matrix to unphysical states.  Overall, our numerical and analytical results are consistent with the conjectured phase diagram in Fig.~\ref{fig:model}(b), but further work remains to fully validate the depth-dependence of the phase boundary due to finite-size effects in the numerics.

To connect our results to practical considerations in the implementation of error mitigation, it is important to more carefully analyze the sample complexity required to see our predicted effects on realistic systems.  The overhead of error mitigation can lead to an exponential sampling cost in the worst case \cite{Berg22,Tsobouchi23,Takagi23,Quek22}, although in near-term devices the added cost can be reasonably overcome on relevant system sizes \cite{Kim2023}.  One of the most promising techniques for PEC is to use tensor-network based approaches where the error mitigation can be done entirely in post-processing without the need to run additional circuits \cite{Filippov2023}.  This approach to error mitigation is particularly well-suited to analog quantum simulation platforms and the exploration of the threshold effects reported here.  We leave a more detailed analysis of practical implementations of our proposal for future work.

The phase transition we discuss only emerges precisely in the thermodynamic limit of large $N$.  In this case, a natural concern is that the exponential cost of error mitigation for circuits with depth greater than $\log \log N$ would make the phase transition studied here unphysical \cite{Tsobouchi23,Takagi23,Quek22}.  However, in more than two dimensions, this phase transition  also arises in constant depth circuits, as we showed directly in the case of Brownian circuits.  In that case, the phase transition manifests as the emergence of an instability of the mitigated density matrix to unphysical states at large disorder.

An important application of our results is to improved benchmarking and verification of random circuit sampling in noisy devices \cite{Boixo2018,Arute2019a,Bouland2019,Movassagh2019,Wu21,Zhu21}.
In  Sec.~\ref{app:fid}, we demonstrated an exponential improvement in a mitigated version of the so-called ``linear cross-entropy'' fidelity below the error mitigation threshold.  Notably, this mitigated benchmark can be computed using existing experimental data.  The below threshold behavior  may have important consequences for proofs of quantum computational advantage in noisy circuits \cite{Aaronson2019,Bouland2021,Deshpande22}, potentially motivating the inclusion of error mitigation in complexity theoretic arguments.  Moreover, the exponential improvement in the fidelity score for $D\ge 2$ might indicate a method to foil ``spoofing'' algorithms that have been developed for unmitigated fidelity benchmarks in noisy random circuits \cite{Gao21}.

These results also have implications for measurement-induced phase transitions in hybrid random circuits \cite{Skinner2019,Li2019,FisherRev22,PotterRev22}.  Such models are not fault-tolerant because circuit-level depolarizing noise drives the system to a short-range correlated mixed state (it acts as a symmetry breaking field in the replica theory \cite{Bao20}).  The addition of the antinoise terms considered here can restore the symmetries of the model on average, potentially allowing for a stable ordered phase.  

More broadly, the existence of error mitigation thresholds has important implications for near-term quantum simulation and quantum computing.  
An error mitigation threshold qualitatively similar to the one studied in this work should arise for any chaotic quantum dynamics including Hamiltonian evolution with sufficiently strong interactions relative to the dissipation.  

\begin{acknowledgements}
We thank Chris Baldwin, Gregory Bentsen, Abhinav Deshpande, Bill Fefferman, Alexey Gorshkov, David Huse, Zlatko Minev, Alireza Seif, Sagar Vijay, and Brayden Ware for helpful discussions.  Work supported in part by NSF QLCI grant OMA-2120757, NSF DMR-1653271, and NSF grant PHY-1748958.
\end{acknowledgements}

 \section*{Data Availability}
The data that support the findings of this article are openly available \cite{data}.

\appendix 


\section{Mean-Field Theory in Brownian Circuits}
\label{app:mft}
Here, we develop a mean-field theory for the error mitigation threshold in the noisy-mitigated Brownian circuit model.

The unitary dynamics in the Brownian quantum circuit model is described by a stochastic Hamiltonian
\begin{equation}
H(t) = \sum_{i < j,\mu,\nu} J_{ij\mu\nu}(t) \sigma_i^\mu \sigma_j^\nu,
\end{equation}
where $\sigma_i^\mu$ are Pauli operators $\mu \in \{ X,Y,Z\}$ for site $i$ of $N$ qubits and $J_{ij\mu\nu}(t)$ is a white-noise correlated coupling with variance \cite{Lashkari2013,Jian22}
\begin{equation}
\langle J_{ij\mu\nu}(t) J_{k \ell \gamma \delta}(t') \rangle = \frac{J}{2 N} \delta_{ik} \delta_{j\ell} \delta_{\mu \gamma} \delta_{\nu \delta} \delta(t-t').
\end{equation}
The noise and antinoise are treated using a Lindblad master equation
\begin{equation}
\dot{\rho} = -i [H(t),\rho] + \sum_i \frac{\gamma_i - \gamma_a}{4}\ \Big( -3 \rho + \sum_\mu \sigma_i^\mu \rho \sigma_i^\mu  \Big),
\end{equation}
where $\gamma_{i}$ is the local random noise rate and $\gamma_a$ is the antinoise rate.

To analyze the dynamics we derive an effective master equation that describes the replicated system
\begin{equation}
M_k(\rho) = \mathbb{E} [ U^{\otimes k} \rho U^{\dag \otimes k} ],
\end{equation}
where $U = \mathcal{T} e^{-i \int_0^t dt' H(t')}$ is the time evolution operator under the Hamiltonian. \new{ Expanding to second order in an infinitesimal time-step and regrouping terms appropriately, we arrive at the equation $M_k(\rho) = e^{\mathcal{L}_k t}$ for a Lindbladian $\mathcal{L}_k$ given by}
\begin{equation}
\mathcal{L}_k(\rho) = \frac{J}{2 N} \sum_{i \ne j, \mu,\nu,r,s}  \sigma_{ir}^{\mu }\sigma_{jr}^{\nu} \rho \sigma_{is}^{\mu }\sigma_{js}^{\nu } -\frac{1}{2} \{ \sigma_{ir}^{\mu}  \sigma_{is}^{\mu } \sigma_{jr}^{\nu } \sigma_{js}^{\nu }, \rho \} ,
\end{equation}
where $r$ and $s$ are replica indices that run over $1,\ldots,k$.  Thus, we arrive at a master equation describing the average dynamics of the replicated density matrices
\begin{equation}
\dot{\rho} = \mathcal{L}_k(\rho) + \sum_i \frac{\gamma_i - \gamma_a}{4}\ \Big( -3 k \rho + \sum_{r,\mu} \sigma_{ir}^\mu \rho \sigma_{ir}^\mu  \Big).
\end{equation}

To develop the mean field theory, we study the two-replica problem $k=2$. Similar to a  Haar random circuit, the Lindbladian $\mathcal{L}_2$ has two steady states $ I^{\otimes N}$ and $S^{\otimes N}$.  
As our mean-field ansatz, we therefore use a product state of the form $\rho = \bigotimes_{i=1}^N \rho_i$.  A further simplification arises from the nature of the dynamics that has an effective SU(2) symmetry in the average-replica dynamics.  As a result, we can express 
\begin{equation}
\rho_i = (1/4+\delta_i) |s\rangle \langle s| + (1/4 -  \delta_i/3) P_T,
\end{equation}
where $\delta_i$ is the deviation from an infinite temperature state, $|s\rangle$ is a two-qubit singlet state  across the two replicas, and $P_T$ is the projector onto the two-qubit triplet subspace of the two replicas.  

The mean-field equations for $\delta_i$ obtained by inserting our mean field ansatz for $\rho$ into the master equation take the form
\begin{equation}
\dot{\delta}_i = -4 \Big[  \Delta_i + \frac{ J}{N} \sum_{j \ne i} (3 + 4\delta_j) \Big] \delta_i,
\end{equation}
where $\Delta_i = \gamma_i - \gamma_a$.  This equation has the two steady-state solutions $\delta_i =0$ and $\delta_i = -3/4$, corresponding to the $ I^{\otimes N}$ and $S^{\otimes N}$ solutions, respectively.  In the case of binary disorder, we  define two populations of sites $A_{1/2}$ such that $\Delta_i = \gamma_{1/2} - \gamma_a$, respectively, for $i \in A_{1/2}$.  We have the zero-mean field condition $p \Delta_1 +(1-p) \Delta_2 = 0$. Defining $G_{\pm} = \frac{1}{N} \Big( \sum_{i \in A_2} \delta_i \pm \sum_{i \in A_1} \delta_i \Big)$, we arrive at simple closed set of equations in the large-$N$ limit
\begin{align*}
\dot{G}_+ & = - 4J (3 + 4 G_+) G_+ + \frac{ 4|\Delta_1|}{2(1-p)} [G_- - (2p -1) G_+ ], \\
\dot{G}_- & = - 4J (3 + 4 G_+) G_- + \frac{ 4|\Delta_1|}{2(1-p)} [G_+ - (2p -1) G_- ].
\end{align*}
Setting $p = 1/2$, these equations reduce to the particularly simple form
\begin{align}
\dot{G}_+ & = - 4J (3 + 4 G_+) G_+ + 4|\Delta_1| G_- , \\
\dot{G}_- & = - 4J (3 + 4 G_+) G_- + 4|\Delta_1| G_+ ,
\end{align}
For general $p$, \new{ the steady state solutions obtained by setting $\dot{G}_\pm = 0$ are $G_+ = 0, -(3 - |\Delta_1|/J)/4, -(3+ \Delta_2/J )/4$.  Performing a linear stability analysis around these fixed points, the all identity solution $G_+ = G_- = 0$ only becomes an unstable fixed point for $|\Delta_1|/J \ge 3$, whereas it remains stable for weaker disorder. }    As a result, the phase transition in the mean-field theory occurs at the disorder strength $|\Delta_1| = 3 J$.  Beyond this value of disorder, the mean-field steady-state solution flows to an unphysical state; however, for weaker disorder, the physical mean-field solution remains stable.

\section{Statistical Mechanics Mapping Formalism}
\label{app:statmech}
Here, we review the statistical mechanics for the model with noise and antinoise, generalizing the mappings studied in Ref.~\cite{Nahum17,Bao20,Jian19,Dalzell2021,Dalzell2022}.
In this paper, we focus our attention to calculating second-moment quantities of a quantum state, which includes measures like fidelity, collision probability, and linear cross entropy. The circuit-averaged calculations of such quantities  lends itself to a statistical mechanical mapping to an Ising spin model. 

Consider a second moment measure $M$, averaged over circuits from ensemble $\mathcal{U}$. For a state a state $\rho_C$ of dimension $2^{2n} \times 2^{2n}$, generated using a circuit $C$ from an ensemble $\mathcal{U}$, a circuit-averaged second moment measure $M$ for can be written as a two-copy expectation of a linear operator $O_M$ of dimension $2^{4n} \times 2^{4n}$.
\begin{multline}
   \mathbb{E}_{C \in \mathcal{U}}[M[\rho_C]]  =  \mathbb{E}_{C \in \mathcal{U}}\left[\tr\left(O_M\rho_C \otimes \rho_C\right) \right] \\ = \tr\left(O_M \mathbb{E}_{C \in \mathcal{U}}\left[\rho_C \otimes \rho_C\right] \right)
\end{multline}
The circuit-averaged two-copy state $\mathbb{E}_{C \in \mathcal{U}}[\rho_C \otimes \rho_C]$, therefore, enables us to calculated second-moment measures.   

We consider circuit models that can be decomposed into a series of elementary two-qubit gates (noisy or noiseless), each drawn independently from the two-qubit Haar ensemble $\mathcal{U}_2$. The action of the circuit map on two-copies of an initial input state is, thus, given by
\begin{equation}
    \rho_C^{\otimes 2} = {C}_s \circ {C}_{s-1} \circ \cdots \circ {C}_{1}[\rho_0\otimes \rho_0],
\end{equation}
where the elementary single-qubit or two-qubit channels are indexed using integers $[1,s]$. For noiseless gates, ${C}_i[\rho_0\otimes \rho_0] = (C_i\otimes C_i) \rho (C_i^\dagger \otimes C_i^\dagger)$. We can model a noisy circuit by adding error channel $\mathcal{E}$ after each noiseless gate:
\begin{equation}
    \rho_C^{\otimes 2} = \mathcal{E}_s \circ {C}_{s-1} \circ \mathcal{E}_{s-1} {C}_{s-1} \circ \cdots \circ \mathcal{E}_1 \circ {C}_{1}[\rho_0\otimes \rho_0].
\end{equation}
In general, the error channel may act on any set of qubits. For our purposes, we assume that the error channel $\mathcal{E}_i$ acts on the qubit or the pair of qubits acted on by the noiseless gate $C_i$. Since we draw each gate from the random one-qubit or two-qubit Haar ensemble, we can replace the noiseless maps $C_i[\rho]$ with the gate-averaged map $\overline{C}_i[\sigma] = \mathbb{E}_{C \in \mathcal{U}_{1/2}} C[\sigma]$,
\begin{equation}
    \mathbb{E}_{C \in \mathcal{U}} [\rho_C \otimes \rho_C] = \mathcal{E}_s \circ \overline{C}_{s-1} \circ \mathcal{E}_{s-1} \overline{C}_{s-1} \circ \cdots \circ \mathcal{E}_1 \circ \overline{C}_{1}[\rho_0 \otimes \rho_0].
\end{equation}
The action of a random single-qubit gate $C$, on a state residing in the two-copy Hilbert space is given by 
\begin{equation}
    \overline{C}^{(1)}[\sigma] = \frac{\tr((1-S/2)\sigma)}{3} I + \frac{\tr((S-1/2)\sigma)}{3} S,
    \label{eq:single-q-gate}
\end{equation}
where $I$ and $S$ are the $4\times 4$ identity matrix and SWAP matrices, respectively. Similarly, the action of a random two-qubit gate on two copies of a qubit-pair is given by 
\begin{equation}
    \overline{C}^{(2)}[\sigma] = \frac{\tr((1-SS/4)\sigma)}{15} II + \frac{\tr((SS-1/4)\sigma)}{15} SS,
\end{equation}
where we use the shorthand $SS = S\otimes S$ and $II = I\otimes I$. 
If two copies of a quantum state can be represented by a string of $\rho^{\otimes 2} \in \{I, S\}^{n}$, a single-qubit gate acts on qubit $k$ by modifying the $j$th bit of the $I-S$ string using the transition rules
\begin{equation}
    I \to I \qquad S \to S,
\end{equation}
while leaving the rest of the bits in the string unchanged. Similarly, a two-qubit gate acting on qubits $j$ and $k$ modifies the $j$th and $k$th bits according to the transition rules:
\begin{equation}
 II \to II \qquad IS, SI \to \frac{2}{5}(II + SS)  \qquad {S}{S}. \to {S}{S}.
\end{equation}
A noiseless random circuit, therefore, can be represented as a linear operator acting on the reduced space spanned by basis elements in $\{I, S\}^{n}$, with each noiseless single-qubit gate given an identity map, and a two-qubit gate given by the transition matrix 
\begin{equation}
    T[\overline{C}^{(2)}] = \begin{pmatrix}1 & 2/5 & 2/5 & 0 \\ 0 & 0 & 0 & 0 \\ 0 & 0 & 0 & 0 \\ 0 & 2/5 & 2/5 & 1
    \end{pmatrix}.
\end{equation}

We can similarly, find the transition matrices corresponding to the noise and antinoise channel. The single-qubit depolarization channel with error rate $q$, given by the following map
\begin{equation}
    \mathcal{E}^{(1)}(\rho) = (1-q)\rho +  q\tr(\rho)\frac{\mathbb{1}}{2},
\end{equation}
acts on two copies of a qubit such that 
\begin{equation}
    \mathcal{E}[I] =  \mathcal{E}[I_2 \otimes I_2] =  \mathcal{E}^{(1)}[I_2] \otimes \mathcal{E}^{(1)}[I_2] = I_2 \otimes I_2 = I,
\end{equation}
where $I_2$ is a $2\times 2$ identity matrix. Similarly,
\begin{align*}
    \mathcal{E}&[S] \\
      &= (\mathcal{E}^{(1)}\otimes\mathcal{E})[I_2 \otimes I_2 + X\otimes X + Y \otimes Y + Z\otimes Z]/2\\
       &= \left[I_2 \otimes I_2 + (1-q)^2\left(X\otimes X + Y \otimes Y + Z\otimes Z\right)\right]/2\\
      &= \left[(1-(1-q)^2)/2 I + (1-q)^2S\right]
\end{align*}
where we have used the fact that $S = (I_2 \otimes I_2 + X\otimes X + Y \otimes Y + Z \otimes Z)/2$. The transition matrix corresponding to a depolarizing noise, in the statistical mechanical picture is given by, 
\begin{equation}
    T[\mathcal{E}_{q}] = \begin{pmatrix} 
       1 & (1-(1-q)^2)/2 \\
       0 & (1-q)^2 
    \end{pmatrix}
\end{equation}

Likewise, the antinoise channel of strength $q_a$, given by 
\begin{equation}
    \mathcal{A}^{(1)}(\rho) = \frac{1}{1-q_a}\left(\rho - q_a \tr(\rho)\frac{\mathbb{1}}{2} \right),
\end{equation}
acts on two-copies of a qubit such that 
\begin{align}
    \mathcal{A}[I] & =  (\mathcal{A}^{(1)}\otimes\mathcal{A}^{(1)})[I_2 \otimes I_2] =  I, \text{ and}  \\ 
      \mathcal{A}[S] &= \left[\left(1-\frac{1}{(1-q_a)^2}\right)\frac{I}{2} + \frac{1}{(1-q_a)^2}S\right],
\end{align}
giving a transition matrix 
\begin{equation}
   T[\mathcal{A}_{q_a}] = \begin{pmatrix} 
       1 & \left(1-(1-q_a)^{-2}\right)/2 \\
       0 & {(1-q_a)^{-2}}.
    \end{pmatrix}
\end{equation}

Concatenating the transition matrix for the noise channel and the antinoise channel gives the transition matrix for the composite channel. $T[\mathcal{A}_{q_a} \circ \mathcal{E}_q] = T[\mathcal{A}_{q_a}].T[\mathcal{E}_{q}]$. 

In our simulations, we start with an initial state drawn from the Haar ensemble or a random product state. A Haar random state is proportional to $I^{\otimes n} + S^{\otimes n}$ in the two-copy descrpition, while a random product state is proportional to $(I+S)^{\otimes n}$. We can then use the statistical mechanical formalism discussed above to evolve this state using the respective transition matrices for two-qubit gates, noise and antinoise channels.





\bibliography{main_prb.bbl}


\end{document}